\newtheorem{definition}{Definition}
\newtheorem{corollary}{Corollary}
\title{Graph Based Answer Set Programming Solver Systems}
\author{Fang Li
\institute{University of Texas at Dallas\\ Richardson, USA}
\email{fang.li@utdallas.edu}
}
\begin{document}
\maketitle

\begin{abstract}
Answer set programming (ASP) is a popular nonmonotonic-logic based paradigm for knowledge representation and solving combinatorial problems. Computing the answer set of an ASP program is NP-hard in general, and researchers have been investing significant effort to speed it up. The majority of current ASP solvers employ SAT solver-like technology to find these answer sets. As a result, justification for why a literal is in the answer set is hard to produce. There are dependency graph based approaches to find answer sets, but due to the representational limitations of dependency graphs, such approaches are limited. This paper proposes a novel dependency graph-based approach for finding answer sets in which conjunction of goals is explicitly represented as a node which allows arbitrary answer set programs to be uniformly represented. Our representation preserves causal relationships allowing for justification for each literal in the answer set to be elegantly found. In this paper, we explore two different approaches based on the graph representation: bottom-up and top-down. The bottom-up approach finds models by assigning truth values along with the topological order, while the top-down approach generates models starting from the constraints.
\end{abstract}

\section{Introduction}
Answer set programming (ASP) \cite{GL2,MT5,EFPL12} is a popular nonmonotonic-logic based paradigm for  knowledge representation and solving combinatorial problems. Computing the answer set of an ASP program is NP-hard in general, and researchers have been investing significant effort to speed it up. Most ASP solvers employ SAT solver-like technology to find these answer sets. As a result, justification for why a literal is in the answer set is hard to produce. There are dependency graph (DG) based approaches to find answer sets, but due to the representational limitations of dependency graphs, such approaches are limited. In this paper we propose a novel dependency graph-based approach for finding answer sets in which conjunction of goals is explicitly represented as a node which allows arbitrary answer set programs to be uniformly represented. Our representation preserves causal relationships allowing for justification for each literal in the answer set to be elegantly found.

Compared to SAT solver based implementations, graph-based implementations of ASP have not been well studied. Very few researchers have investigated graph-based techniques. NoMoRe system \cite{anger2001nomore} represents ASP programs with a \textit{block graph} (a labeled graph) with meta-information, then computes the A-coloring (non-standard graph coloring with two colors) of that graph to obtain answer sets. Another approach \cite{konczak2005graphs} uses \textit{rule dependency graph} (nodes for rules, edges for rule dependencies) to represent ASP programs, then performs graph coloring algorithm to determine which rule should be chosen to generate answer sets. Another group \cite{linke2005suitable} proposed an hybrid approach which combines different kinds of graph representations that are suitable for ASP. The hybrid graph uses both rules and literals as nodes, while edges represent dependencies. It also uses the A-coloring technique to find answer sets.

All of the above approaches were well designed, but their graph representations are complex as they all rely on extra information to map the ASP elements to nodes and edges of a graph. In contrast, our approach uses a much simpler graph representation, where nodes represent literals and an edge represent the relationship between the nodes it connects. Since this representation faithfully reflects the causal relationships, it is capable of producing causal justification for goals entailed by the program.

In the rest of the paper, we assume that the reader is familiar with ASP. Details can be found elsewhere \cite{baral,gelfond2014knowledge}.

\section{Dependency Graph}\label{sec:dg}

A dependency graph \cite{linke2005suitable} uses nodes and directed edges to represent dependency relationships of an ASP rule. 

\begin{definition}
The dependency graph of a program is defined on its literals s.t. there is a positive (resp. negative) edge from $p$ to $q$ if $p$ appears positively (resp. negatively) in the body of a rule with head $q$.
\label{def1}
\end{definition}

Conventional dependency graphs are not able to represent ASP programs uniquely. This is due to the inability of dependency graphs to distinguish between non-determinism (multiple rules defining a proposition) and conjunctions (multiple conjunctive sub-goals in the body of a rule) in logic programs. For example, the following two programs have identical dependency graphs (Figure \ref{fig:fig1}).\\
\noindent\begin{minipage}{.45\textwidth}
\begin{lstlisting}[language=prolog,basicstyle=\small]
    %% program 1  
    p :- q, not r, not p.
\end{lstlisting}
\end{minipage}
\begin{minipage}{.45\textwidth}
\begin{lstlisting}[language=prolog,basicstyle=\small]
    %% program 2
    p :- q, not p. p :- not r.
\end{lstlisting}
\end{minipage}

To make conjunctive relationships representable by dependency graphs, we first transform it slightly to come up with a novel representation method. This new representation method, called conjunction node representation (CNR) graph, uses an artificial node to represent conjunction of sub-goals in the body of a rule. This conjunctive node has a directed edge that points to the rule head (Figure \ref{fig:fig2}).

\begin{figure}[tb]
\centering
\begin{minipage}{.5\textwidth}
  \centering
  \includegraphics[scale=0.3]{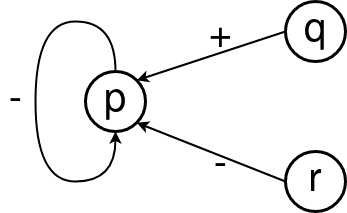}
      \caption{Dep. Graph for Programs 1 \& 2}
    \label{fig:fig1}

\end{minipage}%
\begin{minipage}{.5\textwidth}
  \centering
    \begin{subfigure}[b]{0.5\linewidth}
        \centering
        \includegraphics[scale=0.3]{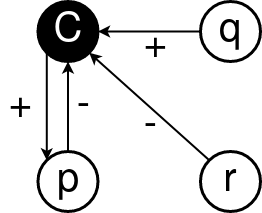}
        \caption{CNR for Program 1}
    \end{subfigure}\hfill
    \begin{subfigure}[b]{0.5\linewidth}
        \centering
        \includegraphics[scale=0.3]{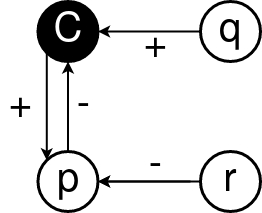}
        \caption{CNR for Program 2}
    \end{subfigure}
    \caption{CNRs for Program 1 \& 2}
    \label{fig:fig2}
\end{minipage}
\end{figure}

The conjunction node, which is colored black, refers to the conjunctive relation between the in-coming edges from nodes representing subogals in the body of a rule. Note that a CNR graph is not a conventional dependency graph.

\medskip\noindent\textbf{Converting CNR Graph to Dependency Graph} \label{sec:cnrtodg}
Since CNR graph does not follow the dependency graph convention, we need to convert it to a proper dependency graph in order to perform dependency graph-based reasoning. We use a simple technique to convert a CNR graph to an equivalent conventional dependency graph. We negate all in-edges and out-edges of the conjunction node. This process essentially converts a conjunction into a disjunction. Once we do that we can treat the conjunction node as a normal node in a dependency graph. As an example, Figure \ref{fig:fig3} shows the CNR graph to dependency graph transformation for program {\tt p :- q, not r.} This transformation is a simple application of De Morgan's law. The rule in this program represents {\tt p :- C. and C :- q, not r.} The transformation produces the equivalent rules {\tt p :- not C., C :- not q. and C :- r.}


\noindent Since conjunction nodes are just helper nodes which allow us to perform dependency graph reasoning, we don't report them in the final answer set.

\medskip\noindent\textbf{Constraint Representation} \label{sec:constraintnode}
ASP also allows for special types of rules called constraints. There are two ways to encode constraints: (i) headed constraint where negated head is called directly or indirectly in the body (e.g., Program 3), and (ii) headless constraints (e.g., Program 4). \\
\noindent\begin{minipage}{.45\textwidth}
\begin{lstlisting}[language=prolog,basicstyle=\small]
    %% program 3
    p :- not q, not r, not p.
\end{lstlisting}
\end{minipage}
\begin{minipage}{.45\textwidth}
\begin{lstlisting}[language=prolog,basicstyle=\small]
    %% program 4
    :- not q, not r.
\end{lstlisting}
\end{minipage}

Our algorithm models these constraint types separately. For the former one, we just need to apply the CNR-DG transformation directly. Note that the head node connects to the conjunction node both with an in-coming edge and an out-going edge (Figure \ref{fig:fig4a}). For the headless constraint, we create a head node with truth value as \textit{False}. 

\begin{figure}[tb]
\centering
\begin{minipage}{.5\textwidth}
  \centering
  \includegraphics[width=.6\linewidth]{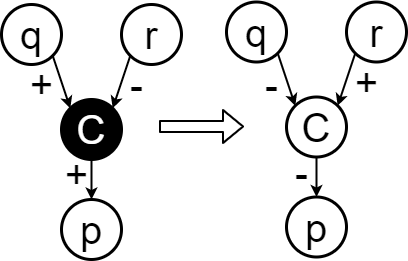}
  \caption{CNR-DG Transformation}
  \label{fig:fig3}
\end{minipage}%
\begin{minipage}{.5\textwidth}
  \centering
    \begin{subfigure}[b]{0.5\linewidth}
        \centering
        \includegraphics[scale=0.3]{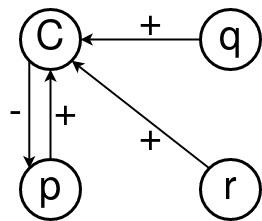}
        \caption{Program 3}
        \label{fig:fig4a}
    \end{subfigure}\hfill
    \begin{subfigure}[b]{0.5\linewidth}
        \centering
        \includegraphics[scale=0.3]{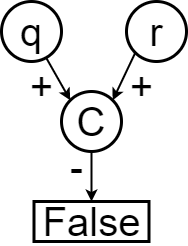}
        \caption{Program 4}
        \label{fig:fig4b}
    \end{subfigure}
    \caption{Constraint DG}
    \label{fig:fig4}
\end{minipage}
\end{figure}

The reason why we don't treat a headless constraint the same way as a headed constraint is because in the latter case, if head node ({\tt p} in Program 3) is provable through another rule, then the headed constraint is inapplicable. Therefore, we cannot simply assign a false value to its head.

\section{grASP: A Bottom-up Approach}
We have developed the grASP graph-based algorithm for finding answer sets. The philosophy of grASP is to translate an ASP program into a dependency graph (via CNR conversion), then propagate truth values from nodes whose values are known to other connected nodes, obeying the sign on the edge, until the values of all the nodes are fixed. However, due to possible existence of a large number of cycles, the propagation process is not straightforward. In grASP, we define a collection of rules for propagating values among nodes involved in cycles. These assignment rules take non-monotonicity of answer set program and the causal relationship among nodes in the dependency graph into account.

Unlike other SAT-solver based approaches, our graph based approach enables stratification of ASP programs on the basis of dependence. The Splitting Theorem \cite{lifschitz1994splitting} can thus be  used to link the various levels, permitting values to be propagated among nodes more efficiently. Also, the existence of sub-structures (sub-graphs) makes an efficient recursive implementation algorithm possible.

\subsection{Cycles in Program} \label{sec:cycle}

In an ASP program, cycles among literals may exist. There are three kinds of cycles that can be found in the program: even cycles, odd cycles, and positive cycles. Even cycles and odd cycles refer to cycles that have even or odd number of negative edges in the corresponding dependency graph. Positive cycles are cycles with no negative edge. It is well known that even cycles generate multiple worlds, while odd cycles kill worlds. For example, in program {\tt p :- not q. q :- not p.}, $p$ and $q$ forms an even cycle, which generates two mutually exclusive worlds: \{p/True, q/False\} and \{q/True, p/False\}. For program {\tt p :- not q. q :- not r. r :- not p.}, nodes $p$, $q$ and $r$ form an odd cycle, which makes the program unsatisfiable. 

\subsection{The grASP Algorithm} \label{sec:graspalgorithm}

The  grASP algorithm is recursive in nature. Since a dependency graph represents the causal relationships among nodes, the reasoning should follow a topological order. We don't need to do topological sorting to obtain the order, instead, for each iteration, we just pick those nodes which have no in-coming edges. We call this kind of node a \textbf{root node}. After picking the root nodes, the algorithm checks their values. If a root node's value has not been fixed (no value yet), we assign \textit{False} to it. Otherwise, the root node will keep its value as is. Once all root nodes' values are fixed, we will propagate the values along their out-going edges in accordance with the sign on each edge (the propagation rules will be discussed in Section \ref{sec:propagate}). At the end of this iteration, we remove all root nodes from the graph, then pass the rest of the graph to the recursive call for the next iteration.

The input graph may contain cycles, and, of course, there will be no root nodes in a cycle. Therefore, this recursive process will leave a cycle unchanged. To cope with this issue, we proposed a novel solution, which wraps all nodes in the same cycles together, and treat the wrapped nodes as a single \textit{virtual} node. All the in-coming and out-going edges connecting the wrapped nodes to other nodes will be incident on or emanate from the virtual node. Thus, the graph is rendered acyclic and ready for the root-finding procedure. 

For each iteration of the recursive procedure, we have to treat regular root nodes and virtual root nodes differently. If the node is a regular node, we do the value assignment, but if it is a virtual node, we will have to \textit{break the cycles}. Cycle breaking means that we will remove the appropriate cycle edges by assigning truth values to the nodes involved (\textit{cycle breaking} will be discussed in this section later). After cycle breaking, we will pass the nodes and edges in this virtual node to another recursive call, because the virtual node can be seen as a substructure of the program. The returned value of the recursive call will be the answer set of the program constituting the virtual node. When all regular and virtual root nodes are processed, we will have to merge the values for propagation.

The value propagation in each iteration makes use of the \textbf{splitting theorem} \cite{lifschitz1994splitting} (details omitted due to lack of space). After removing root nodes,  rest of the graph acts as the \textit{top} strata and all of the predecessors constitute the \textit{bottom} strata, using the terminology of \cite{lifschitz1994splitting}. Thus, when we reach the last node in the topological order, we will get the whole answer set.

The cycle breaking procedure may return multiple results, because a negative even cycle generates two worlds (as discussed in Section \ref{sec:cycle}). Therefore, the merging of solution for the root nodes may possibly result in exponential number of solutions. For example, if the root nodes consists of one regular node and two virtual nodes, each virtual node generates two worlds \& the merging process will return four worlds. Of course, this exponential behavior is inherent to ASP.

\medskip\noindent\textbf{Propagation Rules} \label{sec:propagate}
In an ASP rule, the head term only can be assigned as \textit{True} if all its body term(s) are true. For example, in rule {\tt p :- not q.}, only when {\tt q} is unknown or known as \textit{False}, {\tt p} will be \textit{True}. For another example {\tt p :- q.}, {\tt p} will be \textit{True}, only when {\tt q} is known as \textit{True}. In both examples, {\tt p} will not be assigned as \textit{False}, until the reasoning of the whole program fails to make it \textit{True}. Therefore, mapping this to our graph representation, we obtain two propagation rules: (i) when a node $N$ has a \textit{True} value, assign \textit{True} to all the nodes connected to $N$ via positive out-going edges of $N$; (ii) when a $N$ node has a \textit{False} value, assign \textit{True} to all the nodes connected to $N$ via negative out-going edges of $N$.

\medskip\noindent\textbf{Cycle Wrapping}
As previously mentioned (Section \ref{sec:graspalgorithm}), those nodes which are involved in the same cycles need to be wrapped into a virtual node. The reason being that we want the tangled nodes to act like a single node, in order to be found as a root. This requires the dependency of the wrapped nodes to be properly handled. The virtual node should inherit the dependencies of all the node it contains. These dependency relations include both incoming and outgoing edges.

Since cycles may be overlapped or nested with each other, we can make use of the strongly connected component concept in graph theory. Thus, each strongly connected component will be a virtual node.

\medskip\noindent\textbf{Cycle Breaking} \label{sec:cyclebreaking}
We can state the following corollary:

\begin{corollary}
In the dependency graph of an answer set program, if a node's value is True, all of its in-coming edges and negative out-going edges can be removed. If a node's value is False, then all its positive out-edges can be removed.
\label{corol}
\end{corollary}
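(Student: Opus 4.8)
The plan is to reduce the corollary to the two propagation rules introduced above, reading each directed edge as a conditional channel that can inject a \emph{True} value into its target only under one specific condition on its source. A positive edge $p \to q$ encodes that $p$ occurs positively in the body of a rule for $q$, so by rule~(i) it can ever contribute to making $q$ true only while $p$ is \emph{True}; a negative edge $p \to q$ encodes $\mathtt{not}\,p$ in that body, so by rule~(ii) it can contribute only while $p$ is \emph{False}. I would take this ``an edge fires only when its source meets its polarity condition'' reading as the single semantic invariant driving the whole argument.

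First I would dispatch the two \emph{dead-edge} cases. If $N$ is \emph{True}, then every negative out-going edge $N \to M$ has its firing condition (source \emph{False}) permanently violated, so it can never propagate a value to $M$; deleting it therefore leaves the computed value of every other node unchanged. Symmetrically, if $N$ is \emph{False}, every positive out-going edge $N \to M$ has its firing condition (source \emph{True}) permanently violated, and the same deletion argument applies. These two cases are routine once the invariant is stated, and they already cover the negative-out-edges-of-True and positive-out-edges-of-False parts of the statement.

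The remaining claim --- that all \emph{in-coming} edges of a \emph{True} node may be removed --- is where the real work lies, and I expect it to be the main obstacle. Here the edges in question are not dead: an in-coming edge could still fire. The argument cannot be purely operational; it must appeal to the answer-set requirement that every true atom be supported by a genuine, non-circular derivation. The key step is to observe that when the algorithm fixed $N$ to \emph{True} it did so because some rule body for $N$ was already established true by the propagation reaching $N$ along the topological order, so $N$ already possesses a valid support that does not rely on retaining its own in-coming edges. Consequently the in-coming edges serve only to (re)determine a value that is now settled; removing them cannot alter $N$, and, being in-edges of $N$, they cannot alter any other node either. I would make this precise by showing that for every node $M \ne N$ the sequence of values propagated to $M$ is identical before and after the deletion, and that $N$'s established support witnesses that the resulting assignment is still a model.

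Finally I would flag, and justify, the deliberate asymmetry: the corollary pointedly does \emph{not} permit deleting the in-coming edges of a \emph{False} node. A \emph{False} value is a negation-as-failure conclusion, and its correctness depends on the continued absence of support along precisely those in-coming edges; discarding them would prematurely lock the value in and could mask a support that ought to overturn it. Keeping this case out of the statement is exactly what makes the argument for the True case sound, so I would present the two cases side by side to make the contrast explicit.
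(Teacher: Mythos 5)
Your proposal follows essentially the same route as the paper's proof: both reduce the claim to the two propagation rules, dispatch the negative-out-edges-of-a-\emph{True}-node and positive-out-edges-of-a-\emph{False}-node cases as edges whose firing condition is permanently violated, justify deleting the in-coming edges of a \emph{True} node on the grounds that its value is already settled and needs no further assignment, and note the asymmetry that in-coming edges of a \emph{False} node must be retained. The only divergence is your extra appeal to an already-established, non-circular support for the \emph{True} node, which the paper does not invoke (and which would not literally hold when \emph{True} is assigned as a choice during cycle breaking), but this does not alter the substance of the argument.
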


\begin{proof}
According to the propagation rules (discussed in Section \ref{sec:propagate}), a node can only be assigned \textit{True} through in-coming edges. When a node has already been known as \textit{True}, it no longer needs any assignment, then all in-coming edges become meaningless. Also, a negative edge won't be able to propagate the \textit{True} value to the other side. If a node has been known as \textit{False}, we still have to keep its in-coming edges to detect inconsistency (if some of its predecessors attempt to assign it as \textit{True}, the program is inconsistent). Note that a node labeled \textit{False} cannot make any node \textit{True} through an outgoing positive edge.
\end{proof}
\noindent We will use Corollary \ref{corol} to remove edges while breaking cycles. The most important thing for cycle breaking is that we need to follow a specific order with respect to types of cycles. As we discussed in Section \ref{sec:cycle}, an even cycle can divide a world into two, while an odd cycle will make one or more worlds unsatisfiable and thus disappear (this happens if no node in the odd cycle has the value \textit{True}). Therefore, when breaking a virtual node that has hybrid cycles, we need to first check those odd cycles to make sure that these cycles are satisfiable. An odd cycle will make the world satisfiable if and only if it has \textit{True} nodes. In a virtual node, the only way for an odd cycle to have \textit{True} node would be by overlapping with an even cycle. The overlapping node is assigned \textit{True} which means that the even cycle admits only possible world, that is the one with the overlapped node as \textit{True}.

When there is an odd cycle which does not overlaps with any even cycle, no assignment is possible. If there is neither even cycle, nor odd cycle, the only possible situation is that all nodes that are connected by positive edges. Since we are in the virtual node which has no predecessors, there is no way to make these nodes \textit{True}; so we assign \textit{False} to every node, and delete all edges.

\medskip\noindent\textbf{Performance} \label{sec:implementation} 
The grASP system has been written in python and uses the $DiGraph$ data structure and \textit{simple\_cycles} function from NetworkX ~\cite{SciPyProceedings_11}. The performance testing for grASP was done on two types of programs: (i) established benchmarks such as N-queens; (ii) randomly generated answer set programs. Clingo \cite{DBLP:journals/corr/GebserKKS14} was chosen as the system to compare with. For the first phase, we chose four classic NP problems (map coloring problem, Hamiltonian cycle problem, etc.). The results are shown in Table \ref{tb1}. For the second phase, we used a novel propositional ASP program generator that we have developed for this purpose to generate random programs. The testing performed five rounds with 100 programs each (Table \ref{tb2}). The performance comparison shows that for programs with simpler cycle conditions, grASP achieved similar speed to Clingo, but when solving programs with large number of cycles, grASP is slowed down by the cycle breaking process. 

More details about grASP can be found at arXiv \cite{grASP2021fang}.

\begin{table}
\parbox{.5\linewidth}{
\centering
\scalebox{0.9}{
\begin{tabular}{lrr}
\toprule
Problem  & Clingo & grASP \\
\midrule
Coloring-10 nodes & 0.004  & 0.693\\
Coloring-4 nodes & 0.001  & 0.068\\
Ham Cycle-4 nodes(no cross edges) & 0.001  & 0.052\\
Ham Cycle-4 nodes(fully connected) & 0.002  & 0.089\\
Birds    & 0.001  & 0.001\\
Stream Reasoning   & 0.001  & 0.001\\
\bottomrule
\end{tabular}
}
\caption{Performance Comparison on Classic Problems}
\label{tb1}
}
\parbox{.5\linewidth}{
\centering
\scalebox{0.9}{
\begin{tabular}{lrrrrr}
\toprule
Round & \#Rules & \#EC & \#OC  & Clingo & grASP \\
\midrule
1 & 3231 & 61813 & 61781 & 0.032  & 0.351\\
2 & 3078 & 28593 & 29040 & 0.027  & 0.276\\
3 & 3307 & 39346 & 39433 & 0.028  & 0.087\\
4 & 3069 & 14581 & 14868 & 0.022  & 0.405\\
5 & 3074 & 23017 & 22984 & 0.024  & 0.801\\
\bottomrule
\end{tabular}
}
\caption{Performance  on Random Problems (time in seconds, EC: even cycles, OC: odd cycles)}
\label{tb2}
}

\end{table}

\section{igASP: A Top-down Approach}

Our top-down approach is called igASP, which stands for incremental graph-based ASP solver. The philosophy of igASP is to translate an ASP program into a CNR dependency graph, which is always constrained by some constraint rules, then try to satisfy the constraints by assigning presumed truth values to the related nodes, until all constraints have been satisfied. At the same time, igASP will propagate truth values of the nodes whose truth value has already been determined.

Our graph-based approach performs reasoning in an incremental manner. It starts from the constraints in the answer set program and traces along  causal nodes until it find support through facts (well-founded case) or it detects a cycle through negation (cyclic case). Our algorithm can be thought of as a more general form of the Galliwasp algorithm for query-driven execution of answer set programs \cite{galliwasp2}. The igASP approach is constraint-driven and thus significantly reduces the search space by avoiding exploration of worlds that are inconsistent with the constraints. Furthermore, the incremental reasoning from constraints allows igASP to perform query-driven execution.

\subsection{The igASP Algorithm} \label{sec:igASPalgorithm}

  The igASP algorithm is a recursive algorithm. Since a CNR dependency graph represents the causal relationships among nodes, a topological order would indicate the truth values flow along edges from one node to another starting from the leaves. By their nature, the constraint nodes (labeled \textit{False}, discussed in Section \ref{sec:dg}) will be at the end of such flows in the CNR dependency graph. Therefore, we can incrementally establish the satisfiability relationships across all the nodes starting from the constraint nodes. This incremental establishment of satisfiability starting from the constraint nodes amounts to developing a proof tree. An example (Program 5) is shown in Figure \ref{fig:incre-graph} where the graph is to the left and proof tree to the right. To falsify the constraint node, i.e., to ensure it is \textit{False}, node $m$ must be \textit{True} and $n$ must be \textit{False}. For node $m$ to be \textit{True}, at least one of the three must hold: $p$ is \textit{True}, $q$ is \textit{False}, or $r$ is \textit{True}. When every node's presumed truth value has been found to be consistent with all the dependencies, the algorithm will return the answers.
\begin{figure}[tb]
    \centering
    \includegraphics[scale=0.3]{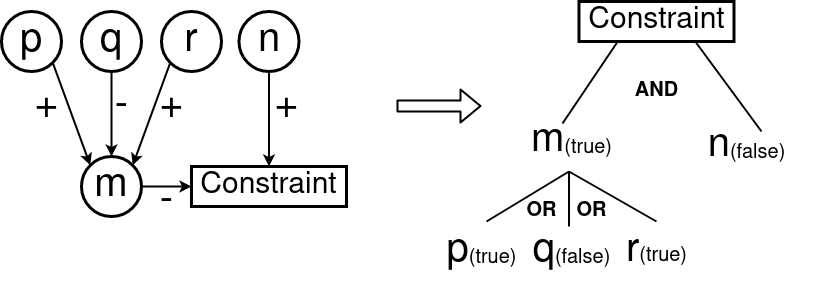}
    \caption{Satisfiability Example}
    \label{fig:incre-graph}
\end{figure}
\begin{lstlisting}[language=prolog,basicstyle=\small]
    %% program 5
    m :- p. m :- not q. m :- r. :- not m. :- n.
\end{lstlisting}

\medskip\noindent\textbf{Effective Edge:}
An effective edge in a CNR dependency graph refers to any edge that propagates \textit{True} value to the node it is incident on. There are two type of \textit{effective} edges: (i) positive edge emanating from a \textit{True} node; (ii) negative edge emanating from a \textit{False} node. An effective edge only points to a \textit{True} node.

\medskip\noindent\textbf{Satisfying Conjunction vs. Disjunction:}
There are two kinds of dependencies that may arise for a node in the CNR dependency graph: conjunctive and disjunctive. A conjunctive dependency refers to the situation where a node is presumed to be \textit{False}. In this case, none of the edges incident into the node should be effective edges. A disjunctive dependency indicates that when a node is presumed to be \textit{True}, at least one of the in-coming edges should be effective. Since igASP works in a reverse manner (from constraints to facts), we may get multiple partial models before we can validate the \textit{True}/\textit{False} label of the current node. For both conjunction and disjunction, these partial models need to be merged for the sake of integrity as well as efficiency. The merging process is discussed later.

\medskip\noindent\textbf{Proof Branch:}
In the igASP algorithm, we start from the constraints that have to be shown to be false, and incrementally construct a proof tree obeying the constraints imposed by the CNR dependency graph. In this incremental reasoning  process, we will pursue various paths in the CNR dependency graph. Our proof will have multiple branches, corresponding to various paths in the CNR dependency graph. Traversal of a branch stops when we reach a fact node whose value has already been given by the ASP program (i.e., known to be a true due to being a fact or known to be false because the atom does not have a rule with that atom as head), or sense that the branch contains a \textbf{cycle}.

\medskip\noindent\textbf{Cycle Handling:}
For positive cycles, we need to ensure that any models computed are consistent with ASP semantics. Suppose we have a program {\tt p :- q. q :- p.}, under ASP semantics, it will have only one answer set:  \{p/False, q/False\}. The other model (\{p/True, q/True\}) has to be rejected, as it is not well-founded per ASP semantics. Thus, positive cycles have to be handled properly so that only correct answer sets are reported. 

To detect a cycle, igASP keeps track of the presumed nodes along the branch, when the current node has been seen previously, we will check whether there exist any \textit{False} node between these two nodes. If so, it is an even cycle, otherwise, it is a positive cycle and only the falsifying assignment should be computed. 

\medskip\noindent\textbf{Model Merging}
As mentioned previously, for each presumed node $n$ (i.e., a node assigned a truth value), its dependencies will be either conjunctive (if $n$ is presumed \textit{False}) 
or disjunctive.
(if $n$ is presumed \textit{True}).
For both conditions, we need to merge the partial models that have been computed so far while assigning a truth value to the dependent nodes.
For the conjunctive condition, the merging process only takes successfully merged models, each of which are the union of two non-conflicting sub-models. 
For example, consider a node $n$ that is presumed to be \textit{False}. Suppose it has two predecessors $p$ and $q$, both $p$ and $q$ connect to $n$ via negative edges. So that $n$ will only be \textit{False} when both $p$ and $q$ will be \textit{True}. We need a conjunctive merge here. Suppose we have sub-models \{p1:\{a/True, d/True, b/False\}, p2:\{a/False, b/True\}\} that hold for $p$ to be \textit{True}, and sub-models \{q1:\{a/True, c/True, b/False\}\} for $q$ to be \textit{True}. The conjunction merging of sub-models between $p$ and $q$ will only accept the union of $p1$ and $q1$, because $p2$ conflicts with $q1$. Therefore, there will only be one model to satisfy for $n$ being \textit{False}, that is \{a/True, c/True, d/True, b/False\}.

For a disjunctive merging, we will keep the conflicted sub-models along with successfully merged ones. Let's modify the above example a little bit by presuming the value of node $n$ to be \textit{True}, and keep everything else unchanged. Now the merging condition became disjunctive, because one of $p$ or $q$ being \textit{False} will still make $n$ \textit{True}. Since $p1$ and $q1$ can be merged without conflict, we replace them by their union \{a/True, c/True, d/True, b/False\}. But this time we don't discard $p2$, because $p2$ is also a valid model that makes $n$ \textit{True}. Therefore, after this merging, we will have two sub-models for $n$ being \textit{True}: \{\{a/True, c/True, d/True, b/False\}, \{a/False, b/True\}\}.

\medskip\noindent\textbf{Forward Propagation:}
Since nodes are assigned values is in a backward chaining manner, where we compute the truth assignment of the predecessors before that of the current node, the sub-models needs to cover as much information as possible. If some nodes' value can be inferred from the proven nodes, they must also be added into the sub-model. For example, suppose we have a sub-model \{a/True, b/False\} for making node $n$ \textit{True}. Suppose there are two additional rules related to node $a$ and $b$: {\tt (i) c :- a. (ii) d :- not b.} In this case, we know that $c$ and $d$ must also be \textit{True}. 

igASP propagates truth values every time a presumed node value has been established, by using a causal map which covers all of the causal relationships for each node/value. When a presumed node/value is established, igASP will check whether there is any other node whose value can be inferred from current node assignments. If there are any, the inferred value is assigned to that node and propagation continues until the model does not change.

\medskip\noindent\textbf{Query Handling:}
A query w.r.t. an ASP program amounts to checking whether a literal is in one of the models of the program. For instance, ASP program {\tt p :- not q. q :- not p. :- p, q.} has two models \{\{p/True, q/False\}, \{p/False, q/True\}\}. If we query $p$, we should get the model $\{p\}$. 

For query handling, igASP negates the query literal and append it to the ASP program as an additional constraint. So for the above example, the query $p/True$ will be converted to a constraint rule {\tt :- not p.} and added to the original program. So the program will now be {\tt p :- not q. q :- not p. :- p, q. :- not p.}

\medskip\noindent\textbf{Non-constrained (Non-headless-rules) Program Handling:}
igASP begins its reasoning from a constraint node (typically, the query represented as a constraint), then searches for a partial answer set to satisfy the constraint. 

This may raise a concern: How about an ASP program that has no global constraints (headless rules) at all? To solve this problem, igASP performs a conversion on the original dependency graph.

Since all original facts in an ASP program should never be \textit{False}. It means that we can take all negated facts as global constraints. Therefore, for any ASP program has default facts, we will generate global constraints accordingly. What if there is no fact in the program? In this case, igASP picks one node, and links it to the ``Constraint" node with both positive and negative edges (via a conjunction node). The reason is simple, a node will either be \textit{True} or \textit{False}. For a program whose dependency graph is disconnected, igASP picks one node from each separated sub-graph, and links them to the ``Constraint" node with both positive and negative edges. For picking which node to connect with the ``Constraint" node, we use a heuristic which chooses the node with most in-coming edges. Since in-coming edges represent dependencies, and each sub-graph is connected, the heuristic is admissible. 

\subsection{A Partial Implementation}
Currently, igASP is still under implementation, we are exploring to apply CDCL (Conflict driven clause learning) to improve the performance. Meanwhile, we have developed an application based on a partial implementation of igASP, which only finds partial answer sets. In some application scenarios, finding the whole answer sets may be an overkill or even inapplicable. Especially for commonsense reasoning, where the knowledgebases may be large and guaranteeing consistency may be hard as parts of the knowledgebase were constructed separately. If a subset of the knoweldgebase that contains the answer we are seeking is consistent, we may not care about other inconsistent part of the knowledgebase. In this case, a partial answer set solver will be preferred. This application system (DiscASP) will be presented at the Technical Communications of ICLP 2021.

\section{Causal Justification} \label{sec:causaljustification}

A major advantage of our graph approaches is that they provides justification as to why a literal is in an answer set for free. Providing justification is a major problem for implementations of ASP that are based on SAT solvers. In contrast to SAT-based ASP solvers, our graph representation maintains the information about structure of an ASP program while computing stable models. Indeed, the resulting graph itself is a justification tree. Since the truth values of all vertices are propagated along edges, we are able to find a justification by looking at the effective out-going edges and their ending nodes. Here the effective out-going edge refer to an edge that actually propagated \textit{True} value to its ending node. According to propagation rules that are discussed in Section \ref{sec:propagate}, there are only two type of \textit{effective} out-going edges: (i) positive edge coming from a \textit{True} node; (ii) negative edge coming from a \textit{False} node. Every effective out-going edge should point to a \textit{True} node. Therefore, the justification first picks effective out-going edges, then check each edge's ending node. If all those ending nodes are \textit{True}, the answer set is justified. 

\section{Current Status and Future Works}
We proposed two dependency graph based approaches to compute the answer sets of an answer set program. We use a novel transformation to ensure that each program has a unique dependency graph, as otherwise multiple programs can have the same dependency graph. A major advantage of our algorithm is that it can produce a justification for any proposition that is entailed by a program. 

Currently, grASP has been finished as the first working version, while igASP is still under implementation. Meanwhile, we have implemented an application of the partial version of igASP, which compute the atoms of an answer set that are related to a query atom within a fixed ``causal distance". This application system will be published at the Technical Communication of ICLP 2021. 

For now, both approaches only work for propositional answer set programs. Our goal is to extend it so that answer sets of datalog programs (i.e., answer set programs with predicates whose arguments are limited to variables and constants) can also be computed without having to ground them first. This will be achieved by dynamically propagating bindings along the edges connecting the nodes in our algorithm's propagation phase. 

Even though the speed of execution on grASP is slower compared to Clingo, it still finds solutions to NP-hard problems in a reasonable time. We expect that igASP will be much more efficient due to its constraint-driven nature. We plan to investigate optimizing techniques such as conflict driven clause learning \cite{silva2003grasp,gebser22} to speed up execution. Our graph approaches are more than just ASP solvers: their visualization feature makes it suitable for educational purpose and for debugging. Moreover, graph-based approaches bring new possibilities for applying optimization. 

\bibliographystyle{eptcs}
\bibliography{generic}
\end{document}